\newtheorem*{theorem*}{Theorem}
\newtheorem{theorem}{Theorem}
\newtheorem{proposition}{Proposition}
\newtheorem{fact}{Fact}
\newtheorem{example}{Example}
\newtheorem*{lemma*}{Lemma}
\newtheorem*{remark*}{Remark}
\newcommand{\CA}{\mathcal{C}({\bf A})}
\newcommand{\atom}{{\rm at}}
\newcommand{\atomA}{\atom({\bf A})}
\newcommand{\amrand}[2]
{\leavevmode
	\marginpar
	[\raggedleft\scriptsize \leavevmode\normalcolor #1: #2]
	{\raggedright\scriptsize \leavevmode\normalcolor #1: #2}
}
\newcommand{\cond}{\vartriangleright}
\title{Conditionals Based on Selection Functions, Modal Operators and Probabilities}
\author{Tommaso Flaminio 
\institute{IIIA - CSIC\\ 
Campus de la UAB\\
Barcelona, Spain}
\email{tommaso@iiia.csic.es}
\and
Lluis Godo
\institute{IIIA - CSIC\\ 
Campus de la UAB\\
Barcelona, Spain}
\email{godo@iiia.csic.es}
\and
Giuliano Rosella 
\institute{Department of Philosophy\\
University of Turin\\
Turin, Italy}
\email{giuliano.rosella@unito.it}
}
\begin{document}
\maketitle

\section{Introduction}

Methods for probability updating, of which Bayesian conditionalization is the most well-known and widely used, are modeling tools that aim to represent the process of modifying an initial epistemic state, typically represented by a prior probability function $P$, which is adjusted in light of new information. Notably, updating methods and conditional sentences seem to intuitively share a deep connection, as is evident in the case of conditionalization. Starting with a probability $P$, the question of what the probability of an event $B$ is, given that $A$ holds (i.e., $P(B \mid A)$), appears to involve conditional reasoning of the form "if $A$ then $B$". Indeed, Adams \cite{Adams1975} argues that the assertability of (indicative) conditionals, represented as "if $A$ then $B$", aligns with the corresponding conditional probability $P(B \mid A)$. However, this intuitive connection between conditionals and updating methods has been challenged by Lewis's celebrated triviality result \cite{Lewis1976}. This result demonstrates that conditional probability $P(B \mid A)$ cannot generally be equated with the probability of the corresponding conditional, $P(A \cond B)$, without leading to trivializing constraints on the initial probability functions.

Building upon these foundational works, research on the relationship between updating methods and conditionals has been and continues to be prolific. This research has primarily focused on understanding whether and how updating methods can be semantically represented in terms of conditional connectives, or conversely, what kind of updating methods are associated with specific semantic conditional operators (see, for instance, \cite{Rosella2023-ROSCAM-10, FRS2025, Hajek1989, Fraassen1976, EgreForthcoming-EGRCAU, Santorioforth}).

The present work contributes to this line of research and aims at shedding new light on the relationship between updating methods and conditional connectives. Departing from previous literature that often focused on a specific type of conditional or a particular updating method, our goal is to prove general results concerning the connection between conditionals and their probabilities. This will allow us to characterize the probabilities of certain conditional connectives and to understand what class of updating procedures can be represented using specific conditional connectives. Broadly, we adopt a general perspective that encompasses a large class of conditionals and a wide range of updating methods, enabling us to prove some general results concerning their interrelation.

Let us begin by providing some background. As observed in \cite{Guenther2022PCIP} (see also \cite{FRS2025}), conditionalization updating can be situated within the broader context of {\em imaging} updating methods. Thge imaging method was initially introduced by Lewis \cite{Lewis1976} as a way to circumvent his triviality result and it hs been extended and generalized by several authors, notably by G\"ardenfors \cite{Gardenfors1982}. While conditionalization is an updating method that requires only an algebra of events and a (positive) probability function, imaging methods update an a prior probability by employing a {\em closeness relation} among possible worlds (the atoms of the algebra), often interpreted as a measure of similarity. This closeness relation imposes a certain structure on the underlying set of possible worlds that is employed within the context of imaging updating methods to determine which possible worlds the probability mass should be redistributed to, typically the closest (most similar) ones to a given one.

Such a closeness relation can also be used to specify the truth conditions of a wide range of conditional connectives, such as variably strict conditionals \cite{Lewis1973-CF} and preferential conditionals \cite{Burgess1981-JOHQCP, Negri2015, Girlando2021}. The informal idea is that a conditional $a \cond b$ is true at a world $w$ if and only if $b$ is true in the closest $a$-worlds to $w$, i.e., those closest worlds that make $a$ true. The structure of closeness has been represented through various model-theoretic frameworks, among which ordered models and sphere models are particularly noteworthy. Sphere models were prominently used by Lewis \cite{Lewis1973-CF} as the intended semantic device for specifying the truth conditions of variably strict conditionals (including counterfactual conditionals), while ordered models have been used, for example, to specify the truth conditions of preferential conditionals. Consequently, this closeness relation of possible worlds has been employed both in specifying imaging-like updating methods and in establishing a semantics for a large class of conditional connectives. Hence, intuitively, one might expect a deep relationship between imaging-like updating methods and conditionals based on a closeness relation, given that both rely on the same  structure. 

Our investigation will specifically focus on these two dimensions: imaging-like updating methods and conditionals, both having a semantics grounded on a closeness relation. Given our aim for generality, we will slightly deviate from the standard model-theoretic approach to conditionals and updating methods that relies on sphere models or ordered models to represent closeness relations, and we will utilize models based on {\em selection functions}.  
Notably, certain classes of ordered models and sphere models have equivalent representations in terms of selection functions (see, e.g., \cite{Lewis1971} and also \cite{Grahne98}). One of the main advantages of the selection function is its technical utility: it is a very general tool for representing a general structure over worlds. However, one of its main shortcomings is conceptual: while orders and spheres offer an intuitive representation of a certain closeness relation, the relation induced by a selection function can sometimes be obscure and abstract.

In the present contribution we aim to provide a further generalization of imaging methods by relying on this more general setting of selection functions. Simultaneously, we aim to present results that conceptually enhance the understanding of certain updating procedures and selection function models. Inspired by this generality, we will primarily work within an algebraic setting, which also generalizes possible worlds models and imaging updating methods. Our basic framework will consist of a finite Boolean algebra $\mathbf{A}$, whose domain is denoted by the corresponding non-bold italic letter $A$,  and whose atoms, forming the set $\atom({\bf A})$, will be thought as possible worlds. The finite algebra ${\bf A}$ will be equipped with a probability function $P$, whence $\atom({\bf A})$ will be endowed with a probability distribution that we still denote by $P$. The structure on $A$ that we need to establish our general updating methods is represented by two basic tools (more technical details will be provided in the next sections):

\begin{itemize}
\item a selection function $f: A \times \atom(\mathbf{A}) \to A$ that, for every atom $\alpha \in \atom(\mathbf{A})$ and every $a \in A$, specifies an element of $A$ that correspond to the set of worlds closest to $\alpha$ according to $a$;

\item a function $\lambda: A \times \atom(\mathbf{A}) \to [0,1]^{|\atom(\mathbf{A})|}$ that will be used to determine how to distribute the probability mass of each $\alpha \in \atom(\mathbf{A})$ to its closest worlds $f(a, \alpha)$.
\end{itemize}
While the next section will introduce the basic notions and results that constitute the ground of our investigation, Section \ref{sec3} will connect selection function-based conditionals with modalities and probabilities, while Section \ref{sec4} will more directly deal with general updating methods. We will end the present paper with some conclusions that will be put forward in Section \ref{sec5}.

\section{Selection Functions and Conditionals}

Let us begin by establishing the fundamental components that define our framework. Boolean algebras will be presented in their usual language $(\wedge,\vee,\neg,\bot,\top)$ of type $(2,2,1,0,0)$ and where the implication operator $\to$ will be defined as $a\to b=\neg a\vee b$. All Boolean algebras that we consider in this paper are assumed to be finite; a generic Boolean algebra will be denoted by ${\bf A}$ and the set of its atoms is given by $\atomA = \{\alpha_1,\ldots, \alpha_n\}$. Let $\wp(X)$ stand for the powerset of a set $X$. Thus, ${\bf A}$ and $\wp(\atomA)$ are isomorphic algebras whence we shall henceforth identify  elements of $A$ with subsets of $\atomA$ and use a typical set-theoretical notation without danger of confusion. For instance, for $a\in A$, we will write $|a|$ to denote the cardinality of the subset of $\atomA$ below $a$ itself. Moreover, for $a,b\in A$ we write $a\subseteq b$ to state that $a$ lies below $b$ in the lattice order of ${\bf A}$ while, when $\alpha\in \atomA$ and $a\in A$ we will simply write $\alpha\in a$ to denote that $\alpha$ is an atom among those that lie below $a$.

A selection function for $\mathbf{A}$ is a mapping $f: A\times \atom({\bf A})\to A$ that associate to each pair of an atom and an element of $A$ another element of $A$ that we will identified with a set of atoms of $\mathbf{A}$ as observed above. Initially, we do not impose any further properties on this function. Intuitively, the selection function specifies the set of closest worlds (atoms) to a given world, relative to a certain element. For instance, $f(a, \alpha)$ represents the set of worlds closest to $\alpha$, as determined by $a$. For every selection function $f: A\times \atom({\bf A})\to A$, let us define an {\em selection function-based conditional} operator (or simply a {\em conditional} operator)
$$
\cond_f: A\times A\to A
$$ 
as follows: for all $a,b\in A$, 
\begin{equation}\label{eqCond}
a\cond_f b=\{\alpha\in \atom({\bf A})\mid  f(a, \alpha) \subseteq  b\}.
\end{equation}
In other words, for all $\alpha\in \atom({\bf A})$,
\begin{center}
$\alpha \in   a\cond_f b\mbox{ iff } f(a, \alpha)\subseteq  b$.
\end{center}

\noindent
As we have already anticipated, this algebraic setting has a very intuitive possible-worlds semantic counterpart: an atom $\alpha$  belonging to  $a \cond_f b$ can be regarded as the possible world $\alpha$ making $a \cond_f b$ true. Consequently, given an intuitive interpretation of the selection function in terms of a closeness relation, the above condition is telling us that $a \cond_f b$ is true at $\alpha$ if and only if all the closest $a$-worlds make $b$ true.  

This closeness relation, specified by $f$, is at this stage very general while it can be instantiated in different ways to represent, for instance, an order of similarity \cite{Lewis1973-CF} or normality among worlds \cite{Burgess1981-JOHQCP}. The logical properties of $\cond_f$ will be induced by the constraints imposed on $f$, i.e., the type of closeness relation we aim to represent. Indeed, having imposed no specific properties on the selection function $f$ does not inherently explain why the operator $\cond_f$ defined as above should be considered a \emph{conditional} operator. For example, it might be desirable that $a \cond_f a$ is always true, or that $\cond_f$ satisfies the modus ponens inference rule, and so on. All these properties can be enforced by imposing relevant constraints on the closeness relation involved in evaluating the conditional, namely the selection function. Below, we list several relevant constraints, including some that are discussed in \cite{Lewis1971}: consider a finite Boolean algebra $\mathbf{A}$ and a selection function $f: A\times \atomA\to A$, we can define the following properties for all $\alpha \in \atomA$:

    \begin{itemize}

    \item $f(\bot, \alpha)=\bot$ \hfill (emptyness)
    
    \item for $a \in A\setminus\{\bot\}$, $f(a, \alpha) \neq \bot$   \hfill (normality)

    \item for $a \in A$, $  f(a, \alpha) \subseteq  a$ \hfill (identity)

    \item if $\alpha \in   a$, then $\alpha \in f(a, \alpha)$ \hfill (centering-1)

    \item if $\alpha \in   a$, then $ f(a, \alpha) \subseteq  \alpha$ \hfill (centering-2)

    \item if $\alpha \in   a$, $f(a, \alpha)=\alpha$ \hfill (centering)

    \item $|f(a, \alpha)| \leq 1$ \hfill (uniqueness)

    \item if $ f(a, \alpha) \subseteq  b$ and $ f(b, \alpha) \subseteq  a$, then $f(a, \alpha)=f(b, \alpha)$ \hfill (well-order)

    \item $ f(a \vee b, \alpha) \subseteq  a$ or  $ f(a \vee b, \alpha) \subseteq  b$ or $f(a \vee b, \alpha)=f(a, \alpha) \vee f(b, \alpha)$ \hfill (nesting)

    \end{itemize}

    The connection to the corresponding conditional connective is straightforward, and this allows  to prove the following result, which readily follows from the conditions imposed on the selection function and equation (1) and from well-known results proved in \cite{Lewis1971}:

    \begin{fact}\label{fc:propoff}
    Consider a finite Boolean algebra $\mathbf{A}$ and a selection function $f: A\times \atom({\bf A})\to A$; the following hold for all $a, b, c \in A$, and all $d \in A\setminus\{\bot\}$:
    \[\begin{array}{lcl}
    \bot \cond_f a = \top & \Leftrightarrow & \text{$f$ satisfies emptyness}\\
    d \cond_f b \subseteq  \neg(d \cond_f  \neg b) & \Leftrightarrow & \text{$f$ satisfies normality}\\
    a \cond_f  a = \top & \Leftrightarrow & \text{$f$ satisfies identity}\\
    a \cond_f  b \subseteq  \neg a \vee b & \Leftrightarrow & \text{$f$ satisfies centering-1}\\
    a \wedge b \subseteq  a \cond_f  b & \Leftrightarrow & \text{$f$ satisfies centering-2}\\
    a \wedge b \subseteq  a \cond_f  b \subseteq  \neg a \vee b & \Leftrightarrow & \text{$f$ satisfies centering}\\
    (a \cond_f \neg b) \vee (a \cond_f b)= \top & \Leftrightarrow & \text{$f$ satisfies uniqueness} \\
    (a \cond_f b) \wedge (b \cond_f a) \subseteq  (a \cond_f c) \leftrightarrow (b \cond_f c) & \Leftrightarrow & \text{$f$ satisfies well-order}\\
    ((a \vee b)\cond_f a) \vee ((a \vee b)\cond_f b) \vee ( ((a \vee b)\cond_f c) = ((a \cond_f c) \wedge (b \cond_f c))) & \Leftrightarrow & \text{$f$ satisfies nesting}
    \end{array}\]

    \end{fact}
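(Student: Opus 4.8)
The plan is to prove each of the nine biconditionals independently, in each case passing through the identification $\mathbf{A} \cong \wp(\atomA)$ so that every Boolean (in)equality can be verified atomwise. Concretely, using \eqref{eqCond}, an assertion of the form $X = \top$ becomes ``$\alpha \in X$ for every $\alpha \in \atomA$'', while $X \subseteq Y$ becomes ``$\alpha \in X \Rightarrow \alpha \in Y$ for every $\alpha$''; expanding the membership conditions of the conditionals occurring in $X$ and $Y$ via \eqref{eqCond} then converts each side into an elementary statement about the behaviour of $f$ at $\alpha$. For instance, $a \cond_f b \subseteq \neg a \vee b$ unfolds to: for all $\alpha$, if $f(a,\alpha) \subseteq b$ and $\alpha \in a$, then $\alpha \in b$, which is exactly the atomwise content of centering-1 once $b$ is allowed to vary freely.

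For the forward implications (each constraint on $f$ delivers the matching property of $\cond_f$) I expect only direct substitution to be needed: emptyness gives $f(\bot,\alpha) = \bot \subseteq a$ for every $a$, whence $\bot \cond_f a = \top$; normality forbids $f(d,\alpha) \subseteq b$ and $f(d,\alpha) \subseteq \neg b$ from holding simultaneously for $d \neq \bot$, since their conjunction would force $f(d,\alpha) = \bot$; and uniqueness makes $f(a,\alpha)$ either $\bot$ or a single atom, which is trivially contained in $b$ or in $\neg b$. The converse implications carry the genuine work, and I would handle them uniformly by \emph{instantiating the universally quantified parameters} $a,b,c$ at well-chosen values. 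Typical choices are $b = \top$ (to extract normality from the second line), $b = \{\alpha\}$ (to read off centering-2, and dually $b = f(a,\alpha)$ for centering-1), and $b = \{\alpha\}$ together with the supposition that $f(a,\alpha)$ contains two distinct atoms (to derive uniqueness by contradiction).

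The main obstacle is the last two lines, well-order and nesting, because there the target property of $f$ is an \emph{equality} of selected sets rather than a mere inclusion, and the conditional statement conceals this equality inside a Boolean biconditional. My plan for these is to exploit that the conditional identity is asserted for \emph{every} $c$ and to feed in two complementary witnesses. For well-order, assuming $f(a,\alpha) \subseteq b$ and $f(b,\alpha) \subseteq a$ (so that $\alpha$ lies in the antecedent $(a \cond_f b) \wedge (b \cond_f a)$), I would set $c = f(a,\alpha)$ and then $c = f(b,\alpha)$ in the consequent $(a \cond_f c) \leftrightarrow (b \cond_f c)$ to obtain $f(b,\alpha) \subseteq f(a,\alpha)$ and $f(a,\alpha) \subseteq f(b,\alpha)$, hence the required equality. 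For nesting, when the first two disjuncts $f(a\vee b,\alpha) \subseteq a$ and $f(a\vee b,\alpha) \subseteq b$ both fail, the remaining biconditional ``$f(a\vee b,\alpha) \subseteq c$ iff ($f(a,\alpha) \subseteq c$ and $f(b,\alpha) \subseteq c$)'' must hold for all $c$; taking $c = f(a\vee b,\alpha)$ yields $f(a,\alpha) \vee f(b,\alpha) \subseteq f(a\vee b,\alpha)$, and taking $c = f(a,\alpha) \vee f(b,\alpha)$ yields the reverse inclusion, producing the equality $f(a\vee b,\alpha) = f(a,\alpha) \vee f(b,\alpha)$ demanded by nesting. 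The forward directions of these two lines are then immediate once the equality of selected sets is in hand. Since all nine correspondences are of this instantiate-and-unfold type, matching the classical selection-function analysis of \cite{Lewis1971}, no deeper machinery should be required.
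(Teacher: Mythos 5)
Your overall strategy is sound and, modulo one witness, complete. The paper itself gives no detailed argument for this Fact (it simply observes that it ``readily follows'' from equation (\ref{eqCond}) and the results of \cite{Lewis1971}), and your atomwise unfolding --- forward directions by direct substitution, converse directions by instantiating the universally quantified $b$ or $c$ at suitable elements --- is exactly the verification that this citation compresses. In particular, your handling of the two delicate lines, well-order and nesting, where the property of $f$ is an \emph{equality} of selected sets, is correct: feeding in the complementary witnesses $c = f(a,\alpha)$ and $c = f(b,\alpha)$ (resp.\ $c = f(a\vee b,\alpha)$ and $c = f(a,\alpha)\vee f(b,\alpha)$) recovers the two inclusions, and your implicitly atomwise reading of the third disjunct of the nesting line (the biconditional evaluated at the single atom $\alpha$, with $c$ still universally quantified) is the reading under which the statement is actually true.

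One step, as written, would fail: the converse for uniqueness. You propose the witness $b = \{\alpha\}$, the evaluation world itself, under the supposition that $f(a,\alpha)$ contains two distinct atoms. But if $\alpha \notin f(a,\alpha)$ --- perfectly possible here, since no centering or identity property is assumed on this line --- then $f(a,\alpha) \subseteq \neg b$, hence $\alpha \in a \cond_f \neg b$, and the identity $(a \cond_f \neg b) \vee (a \cond_f b) = \top$ is not contradicted at $\alpha$. The witness must instead separate two atoms \emph{inside} the selected set: pick distinct $\beta_1, \beta_2 \in f(a,\alpha)$ and set $b = \{\beta_1\}$; then $f(a,\alpha) \not\subseteq b$ (because of $\beta_2$) and $f(a,\alpha) \not\subseteq \neg b$ (because of $\beta_1$), so $\alpha$ lies in neither disjunct, which is the desired contradiction. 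This is the same singleton-instantiation idea you intended, just anchored at the selected atoms rather than at the world of evaluation; note that $b=\{\alpha\}$ is the right choice for centering-2 but not here. (Also keep in mind that uniqueness as defined is $|f(a,\alpha)| \leq 1$, so the forward direction must cover the case $f(a,\alpha) = \bot$, where both disjuncts contain $\alpha$; your ``either $\bot$ or a single atom'' phrasing already does.) With this repair the proof goes through.
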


    For convenience, we may refer to $\cond_f$ as to a {\em counterfactual conditional} when $f$ satisfies the following properties: (i) identity, (ii) well-ordering, (iii) nesting, and (iv) centering. We call $\cond_f$ a {\em variably strict conditional} when $f$ satisfies (i) identity, (ii) well-ordering, and (iii) nesting. Finally, we use the term \emph{Stalnaker conditional} for $\cond_f$ when $f$ satisfies (i) identity, (ii) well-ordering, (iii) nesting, (iv) centering, and (v) uniqueness. This terminology is justified by well-established results in the literature implying that that when $\cond_f$ is a counterfactual conditional, it satisfies precisely the logical principles of Lewis's counterfactuals, namely it obeys the conditional logic C1 (see \cite{Lewis1971}). When $\cond_f$ is a variably strict conditional, it obeys the conditional logic C0 (see \cite{Lewis1971}), which is the basic variably strict conditional logic (see \cite{Lewis1973-CF}). And when $\cond_f$ is a Stalnaker conditional, it obeys the logic C2 (see \cite{Lewis1971}), which corresponds to Stalnaker's logic of conditionals \cite{Stalnaker1968}.

\section{Modalities and Probabilities}\label{sec3}
In this section, we will establish some key results concerning the relationship between normal modal operators and selection functions. These results will subsequently be employed as a tool to begin exploring the connection between selection function-based conditionals and probability.
 
Consider a finite Boolean algebra $\mathbf{A}$ and a selection function $f: A\times \atom({\bf A})\to A$; let  the binary relation $R^f_a\subseteq \atom({\bf A})\times\atom({\bf A})$ be defined as follows: for all $\alpha\in A$, $R^f_a(\alpha)=f(a, \alpha)$. In other words, 
\begin{center}
$\alpha R^f_a \beta$ iff $\beta\in f(a, \alpha)$. 
\end{center}
For each such $R_a^f$, let $\Box_a^f: A \to A$ be defined as standard normal modal operator: for all $b\in A$
$$
\Box_a^f (b)= \{\alpha\in \atom({\bf A})\mid  R_a^f(\alpha)\subseteq  b\} =\{\alpha\in \atom({\bf A})\mid  f(a,\alpha)\subseteq  b\}.
$$ 
Notice that $\Box_a^f$ is, indeed, a normal modal operator in the sense of \cite[Definition 1.42]{Blackburn2002}, that is to say, $\Box_a^f\top=\top$ and $\Box_a^f(b\to c)\subseteq (\Box_a^f b\to \Box_a^f c)$. Moreover, since $R_a^f(\alpha)=f(a, \alpha)$, comparing the latter with (\ref{eqCond}) we can immediately show the following result:
\begin{fact}\label{fact0}
For a finite Boolean algebra ${\bf A}$ and a selection function $f: A\times \atom({\bf A})\to A$; for all $a, b\in A$, it holds that
$$
\alpha \in   a\cond_f b \mbox{ iff }\alpha \in   \Box^f_a(b).
$$
\end{fact}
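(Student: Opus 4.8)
The plan is to prove the biconditional by unfolding both sides to one and the same defining condition, namely $f(a,\alpha)\subseteq b$. The statement is, at bottom, the observation that the conditional operator $\cond_f$ and the normal modal operator $\Box_a^f$ have been introduced through syntactically identical membership clauses, so establishing their pointwise agreement is a matter of reading off the definitions rather than constructing a nontrivial argument.

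First I would recall, from equation (\ref{eqCond}), that $\alpha\in a\cond_f b$ holds precisely when $f(a,\alpha)\subseteq b$. Next I would invoke the definition of the accessibility relation, namely $R_a^f(\alpha)=f(a,\alpha)$, together with the defining clause of the modal operator, which gives $\alpha\in\Box_a^f(b)$ iff $R_a^f(\alpha)\subseteq b$. Substituting the former equality into the latter clause shows that $\alpha\in\Box_a^f(b)$ holds precisely when $f(a,\alpha)\subseteq b$. Thus both memberships are governed by the identical condition.

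Since both $\alpha\in a\cond_f b$ and $\alpha\in\Box_a^f(b)$ reduce to the same condition $f(a,\alpha)\subseteq b$, the two are equivalent for every atom $\alpha$ and all $a,b\in A$, which is exactly the claim. There is no genuine obstacle here: the content of the fact lies not in any delicate reasoning but in recognizing that the bridge $R_a^f(\alpha)=f(a,\alpha)$ makes the selection-function conditional and the modal operator coincide. The only point demanding attention is to note that this equality is the \emph{stipulated} definition of $R_a^f$, so that the equivalence is obtained without importing any of the optional constraints on $f$ (emptyness, normality, centering, and so on); the fact therefore holds for an arbitrary selection function, as stated.
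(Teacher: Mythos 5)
Your proposal is correct and matches the paper's own argument exactly: the paper likewise obtains the fact ``immediately'' by noting that $R_a^f(\alpha)=f(a,\alpha)$ and comparing the definition of $\Box_a^f$ with equation (\ref{eqCond}), so both memberships reduce to the single condition $f(a,\alpha)\subseteq b$. Your additional remark that no optional constraints on $f$ are needed is also consistent with the paper's treatment.
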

In other words, for every $a\in A$, $a\cond_f(\cdot)$ and $\Box^f_a(\cdot)$ are the same unary operator on $A$. This implies that selection function-based conditionals can also be equivalently defined in terms of a family of indexed normal modal operators, $\{\Box_a^f \mid a \in A\}$, each of which behaves as a conditional with a fixed antecedent. Consequently, the structure $({\bf A}, \{\Box^f_a\}_{a \in A})$ forms a {\em Boolean algebra with operators} in the sense of \cite{Blackburn2002}. 
Following standard practice in modal logic, we define $\Diamond_a^f x$ as $\neg\Box_a^f \neg x$ and for all $b\in A$
$$
\Diamond_a^f(b)= \{\beta\in \atom({\bf A})\mid   R_a^f(\beta)\wedge b\neq\bot\}.
$$ 
Treating a conditional (binary) operator as a family of indexed normal modal (unary) operators provides a valuable framework for gaining new insights into the properties of conditional operators. This approach allows us to leverage the well-established behavior of normal modal operators and to demonstrate relevant results that connect the properties of a selection function $f$ (and consequently, the corresponding conditional $\cond_f$) with the properties of the induced indexed modal operators. The following result contributes to this direction:
\begin{fact}\label{fact1} 
For a finite Boolean algebra $\mathbf{A}$ and selection function $f: A\times \atom({\bf A})\to A$ the following holds for all $a \in A$: 
$$
|f(a,\alpha)|=1 \Leftrightarrow (\text{for all }x \in A, \ \Box^f_ax = \Diamond^f_ax).
$$
\end{fact}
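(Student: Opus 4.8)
The plan is to unwind both operators into the set-theoretic conditions supplied by their definitions and then test the claimed equality on a few cleverly chosen elements $x$. Identifying $f(a,\alpha)$ with a subset of $\atomA$, the definitions give, for every $\alpha\in\atomA$, that $\alpha\in\Box^f_a x$ iff $f(a,\alpha)\subseteq x$, and $\alpha\in\Diamond^f_a x$ iff $f(a,\alpha)\wedge x\neq\bot$. I read the left-hand condition as asserting $|f(a,\alpha)|=1$ for \emph{every} atom $\alpha$, so that the statement quantifies uniformly over $\atomA$. The whole argument then reduces to a pointwise, atom-by-atom analysis, since both $\Box^f_a x$ and $\Diamond^f_a x$ are described by membership conditions on individual atoms $\alpha$.

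For the forward implication I would assume $f(a,\alpha)=\{\beta\}$ is a singleton for each $\alpha$. Then $f(a,\alpha)\subseteq x$ holds exactly when $\beta\in x$, and $f(a,\alpha)\wedge x\neq\bot$ also holds exactly when $\beta\in x$; hence $\alpha\in\Box^f_a x$ iff $\alpha\in\Diamond^f_a x$. Since this holds for every $\alpha$ and every $x\in A$, we conclude $\Box^f_a x=\Diamond^f_a x$ for all $x$.

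For the converse I would argue by contraposition, splitting the failure of $|f(a,\alpha)|=1$ into the two cases $|f(a,\alpha)|=0$ and $|f(a,\alpha)|\geq 2$, and for each produce a single witness $x$ separating box from diamond. If $f(a,\alpha)=\bot$ for some $\alpha$, test $x=\top$: then $\alpha\in\Box^f_a\top$ trivially since $\bot\subseteq\top$, whereas $\alpha\notin\Diamond^f_a\top$ because $f(a,\alpha)\wedge\top=\bot$, so $\Box^f_a\top\neq\Diamond^f_a\top$. If instead $f(a,\alpha)$ contains two distinct atoms $\beta_1\neq\beta_2$, test $x=\beta_1$ viewed as the corresponding atom of $A$: then $f(a,\alpha)\wedge\beta_1=\beta_1\neq\bot$, so $\alpha\in\Diamond^f_a\beta_1$, whereas $f(a,\alpha)\not\subseteq\beta_1$ because $\beta_2\in f(a,\alpha)$ while $\beta_2\notin\beta_1$, so $\alpha\notin\Box^f_a\beta_1$; this again contradicts the assumed equality.

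The only real work is choosing these two witnesses, and I expect the main (mild) obstacle to be simply recognizing that the condition $|f(a,\alpha)|=1$ packages two independent requirements, nonemptiness and at-most-one, which must be refuted separately: $x=\top$ detects emptiness and a singleton atom detects multiplicity. Neither case requires anything beyond the membership definitions of $\Box^f_a$ and $\Diamond^f_a$, so once the witnesses are fixed the verification is immediate.
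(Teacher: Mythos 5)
Your proof is correct, but it takes a genuinely different route from the paper's. The paper does not argue from first principles: it invokes the standard modal correspondence result (citing Blackburn et al.) that the identity $\Box x = \Diamond x$ holds in a Boolean algebra with a modal operator if and only if the associated Kripke relation is a function, i.e., every world has exactly one successor, and then merely translates functionality of $R^f_a$ into the condition that $f(a,\alpha)$ is a singleton for every $\alpha$. You instead re-prove that correspondence directly in this concrete finite setting: the forward direction by observing that for $f(a,\alpha)=\{\beta\}$ both membership conditions $f(a,\alpha)\subseteq x$ and $f(a,\alpha)\wedge x\neq\bot$ reduce to $\beta\in x$, and the converse by contraposition with explicit separating witnesses --- $x=\top$ when $f(a,\alpha)=\bot$ (box holds vacuously, diamond fails), and $x=\beta_1$ a single atom when $|f(a,\alpha)|\geq 2$ (diamond holds, box fails because of $\beta_2$). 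Both witnesses check out against the definitions, and your observation that the condition $|f(a,\alpha)|=1$ splits into two independently refutable requirements (nonemptiness and at-most-one) is exactly the content hidden inside the cited correspondence theorem, where functionality of $R^f_a$ packages totality and single-valuedness. What the paper's approach buys is brevity and placement of the fact within a known body of modal correspondence theory; what yours buys is self-containment, with no appeal to an external result, and an explicit exhibition of the test elements at which box and diamond come apart. Your reading of the statement, with $\alpha$ implicitly quantified universally over atoms, is also the reading the paper's own proof adopts, so there is no mismatch there.
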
 
\begin{proof}
The identity $\Box x= \Diamond x$ holds in a Boolean algebra with a modal operator $({\bf A}, \Box)$ iff its associated Kriple frame $(\atom({\bf A}), R)$ satisfies that for all $\alpha\in \atom({\bf A})$, there exists a unique $\beta\in \atom({\bf A})$ such that $\alpha R \beta$, and hence $R$ is a function from $\atom({\bf A})$ to itself, \cite{Blackburn2002}. Thus in particular,  $\Box^f_ax = \Diamond^f_ax$ for all $x \in A$ if and only if $R^f_a$ is a function and hence, if and only if, for all $\alpha$, there is a unique $\beta$ such that $\alpha R_a^f\beta$, that is to say, there exists a unique $\beta\in f(a,\alpha)$. Therefore, if and only if $|f(a,\alpha)|=1$.
\end{proof}

We now have the basic ingredients to begin investigating the relationship between conditionals and probability using the introduced general framework. For our investigation we will henceforth always assume  that  probability functions are positive, {\em i.e.}, they assign $0$ only to the impossible event. Given the finite algebraic environment we are adopting, this choice comes at no cost. So, let $P:{\bf A}\to[0,1]$ be a positive probability on a finite Boolean algebra ${\bf A}$ and consider a selection function $f: A\times \atom({\bf A})\to A$. Then, given the above assumptions and definitions, we define
$$
P(a\cond_f b)=\sum_{\alpha\in  a\cond_f b}P(\alpha).
$$
That is, the probability of a conditional $a \cond_f b$ is given by the sum of the probabilities of the atomic states (worlds) where the conditional is satisfied. In other words, $P(a \cond_f b)$ represents the probability that the conditional $a \cond_f b$ is true.
Directly from Fact \ref{fact0}, we obtain the following.
\begin{fact}\label{fact2}
For a finite Boolean algebra ${\bf A}$, a selection function $f: A\times \atom({\bf A})\to A$, and probability $P:{\bf A}\to[0,1]$ the following holds for all $a, b\in A$:
$$
P(a\cond_f b)=P(\Box_a^f (b)).
$$
\end{fact}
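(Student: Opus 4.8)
The plan is to reduce the statement entirely to the set-level identity already established in Fact \ref{fact0}, together with the additivity of $P$ over atoms. First I would note that Fact \ref{fact0} says precisely that, for all $a,b\in A$, an atom $\alpha$ lies in $a\cond_f b$ if and only if it lies in $\Box_a^f(b)$. Under the standing identification of elements of $A$ with subsets of $\atom(\mathbf{A})$, this means the two elements $a\cond_f b$ and $\Box_a^f(b)$ are literally the same element of $\mathbf{A}$, i.e.\ the same subset of atoms. So the content of Fact \ref{fact2} is just that $P$ assigns equal values to equal elements, once we check that both sides are computed as the same sum over atoms.

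Second, I would make explicit the routine fact that on a finite Boolean algebra every element $c\in A$ is the join of the atoms below it, and since distinct atoms are pairwise disjoint, additivity of $P$ yields $P(c)=\sum_{\alpha\in c}P(\alpha)$. Applying this to $c=\Box_a^f(b)$ gives $P(\Box_a^f(b))=\sum_{\alpha\in\Box_a^f(b)}P(\alpha)$. This is the only place where a property of $P$ beyond being a function on $A$ is used, and it is immediate in the finite positive-probability setting adopted here.

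Third, I would simply chain the definitions: by the definition of $P(a\cond_f b)$ we have $P(a\cond_f b)=\sum_{\alpha\in a\cond_f b}P(\alpha)$; by the set equality $a\cond_f b=\Box_a^f(b)$ from Fact \ref{fact0} this equals $\sum_{\alpha\in\Box_a^f(b)}P(\alpha)$; and by the atom-additivity observation this last sum is $P(\Box_a^f(b))$. Concatenating these three equalities delivers $P(a\cond_f b)=P(\Box_a^f(b))$, as required. I do not expect any genuine obstacle: the result is a direct corollary of Fact \ref{fact0}, and the only subtlety worth spelling out is that the defining sum over the atoms of $\Box_a^f(b)$ coincides with the algebraic value $P(\Box_a^f(b))$, which is exactly the atom-additivity of a probability on a finite Boolean algebra.
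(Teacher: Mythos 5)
Your proposal is correct and follows exactly the paper's route: the paper proves Fact \ref{fact2} "directly from Fact \ref{fact0}," i.e.\ by observing that $a\cond_f b$ and $\Box_a^f(b)$ are the same element of $\mathbf{A}$ and hence receive the same probability. Your additional spelling-out of atom-additivity is a harmless elaboration of what the paper leaves implicit.
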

\noindent
As a direct consequence of the above Fact \ref{fact2} and by the main results of \cite{Harmanec1994}, the map defined as
\begin{equation}\label{eqBELIEF}
P(a\cond_f(\cdot))=P(\Box_a^f(\cdot))
\end{equation}
is a {\em belief function} in the sense of Dempster-Shafer theory \cite{Dempster}. Clearly, this observation does not preclude the possibility that $P(a\cond_f(\cdot))$ is a probability function for some $a \in A$.

Let us now examine in more detail this connection between conditionals and belief functions. First, recall that belief functions differ from probability functions in their treatment of disjunction: probabilities are additive, satisfying $P(a \vee b)=P(a) + P(b) - P(a \wedge b)$, whereas belief functions, denoted $Bel$, are superadditive, that is to say, $Bel(a\vee b)\geq Bel(a)+Bel(b)-Bel(a\wedge b)$. 

 Besides the axiomatic definition of belief functions, that the interest reader can find for instance in \cite{Halpern} and on which we will not rely for the rest of the present paper, it is convenient to recall how belief functions can be described in terms of their {\em mass distribution}. 
Specifically, every belief function is induced by a mass distribution $m$ over a Boolean algebra $\mathbf{A}$, i.e., a function $m: A\to [0, 1]$ such that $\sum_{a \in A} m(a)=1$. From this mass distribution, the belief function is then defined as follows:
\begin{equation}\label{def:belief}
Bel(a)=\sum_{b \subseteq a} m(a).
\end{equation}
 Observe that, while probability functions on ${\bf A}$ are fully determined by distributions on $\atomA$, belief functions on $\bf A$ are fully described by mass functions on $A$ itself, and not only on its atoms. This easy observation marks the key difference between the two uncertain measures.

Given this definition, we can more vividly represent the belief function induced by a conditional $\cond_f$ according to equation \ref{eqBELIEF} by appealing to a generalization of the \emph{imaging-like updating method} for belief functions introduced in \cite{Dubois1994}. Specifically, let us consider a finite Boolean algebra $\mathbf{A}$, a selection function $f: A \times \atom(\mathbf{A})\to A$, and a probability $P: A \to [0, 1]$. For a given element $a \in A$ we can define a mass distribution $m_a: A \to [0, 1]$  as follows: $m_a(b)=\sum_{ \alpha:f(a, \alpha) = b}P(\alpha)$. Then the corresponding \emph{imaged} belief function is:
\begin{equation}\label{def:imagedmass}
Bel_a(b)=\sum_{c \subseteq b}m_a(c)=\sum_{c \subseteq b}\left(\sum\limits_{\alpha:  f(a, \alpha) = c}P(\alpha)\right)=\sum\limits_{\alpha:  f(a, \alpha) \subseteq b}P(\alpha).
\end{equation}

\noindent
As a consequence, we obtain the following:

\begin{fact}\label{fc:characterization}
Consider a finite Boolean algebra $\mathbf{A}$, a selection function $f: A \times \atom(\mathbf{A})\to A$, and a probability $P: {\bf A} \to [0, 1]$, the following holds for all $a, b \in A$:
\[P(a\cond_f (\cdot))=P(\Box_a^f (\cdot))=Bel_a(\cdot)\]
\[P(a\cond_f b)=P(\Box_a^f (b))=Bel_a(b)\]

\end{fact}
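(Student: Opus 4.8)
The plan is to observe that essentially all the work has already been done by the preceding facts and by the computation displayed in equation (\ref{def:imagedmass}); what remains is to assemble these pieces. The first equality, $P(a\cond_f b)=P(\Box_a^f(b))$, is precisely the content of Fact \ref{fact2}, so I would simply cite it. The only genuine task is therefore to establish the second equality, $P(\Box_a^f(b))=Bel_a(b)$, and then to promote the numerical identity (for a fixed $b$) to the pointwise identity of the two maps $P(a\cond_f(\cdot))$ and $Bel_a(\cdot)$ asserted in the first displayed line.

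For the second equality I would expand both sides from their definitions. On the left, unwinding the definition of $P$ as a sum over atoms together with $\Box_a^f(b)=\{\alpha\in\atomA\mid f(a,\alpha)\subseteq b\}$ gives $P(\Box_a^f(b))=\sum_{\alpha:\, f(a,\alpha)\subseteq b}P(\alpha)$. On the right, the chain of equalities already written out in equation (\ref{def:imagedmass}) shows that $Bel_a(b)=\sum_{\alpha:\, f(a,\alpha)\subseteq b}P(\alpha)$ as well. Hence the two sides coincide term by term, and the identity is immediate.

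The one step that deserves to be made explicit is the regrouping underlying equation (\ref{def:imagedmass}): the set $\{\alpha\mid f(a,\alpha)\subseteq b\}$ decomposes as the disjoint union $\bigcup_{c\subseteq b}\{\alpha\mid f(a,\alpha)=c\}$, since for each atom $\alpha$ the value $f(a,\alpha)$ is a single, well-defined element of $A$. This is exactly what makes $m_a$ a legitimate mass distribution: summing $m_a(c)$ over all $c\in A$ counts each atom $\alpha$ precisely once, through the unique block indexed by $c=f(a,\alpha)$, so that $\sum_{c\in A}m_a(c)=\sum_{\alpha\in\atomA}P(\alpha)=1$, confirming that $Bel_a$ is genuinely the belief function induced by $m_a$ via equation (\ref{def:belief}).

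I do not expect any real obstacle here: the result is pure bookkeeping, and the substantive conceptual point — that $m_a$ is the \emph{push-forward} of the distribution $P$ along the map $f(a,\cdot)\colon\atomA\to A$ — is already packaged into the definition of $m_a$. Finally, since $b\in A$ was arbitrary throughout, the numerical identities hold simultaneously for every $b$, which is precisely the pointwise (functional) equality $P(a\cond_f(\cdot))=P(\Box_a^f(\cdot))=Bel_a(\cdot)$.
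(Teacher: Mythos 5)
Your proposal is correct and takes essentially the same route as the paper: the first equality is just Fact \ref{fact2}, and the second is precisely the chain of equalities displayed in equation (\ref{def:imagedmass}), after which the paper states the fact ``as a consequence'' with no further argument. Your explicit justification of the disjoint regrouping $\{\alpha \mid f(a,\alpha)\subseteq b\} = \bigcup_{c\subseteq b}\{\alpha \mid f(a,\alpha)=c\}$, and your check that $m_a$ really sums to $1$, make explicit two bookkeeping steps the paper leaves implicit, but they do not constitute a different argument.
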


Specifically, the probability of a selection function conditional $a \cond_f b$ is equal to the corresponding belief function of the consequent imaged on the antecedent, i.e. $Bel_a(b)$.

This general result implies that the probability of a broad class of conditionals can be characterized in terms of an updated belief function. For instance, variably strict conditionals, which encompass both Stalnaker conditionals and Lewis counterfactuals (see \cite{Lewis1973-CF, Lewis1971}), are known to be characterizable as selection function-based conditionals. Similarly, preferential conditionals \cite{Negri2015, Girlando2021, Burgess1981-JOHQCP} also fall under this category. Indeed, the semantics of preferential conditionals is typically defined using preorders, such that a preferential conditional $a \cond_f b$ is true at a world $\alpha$ if and only if the minimal $a$-worlds with respect to the preorder associated with $\alpha$ are also $b$-worlds. This semantics can be reformulated such that the set of minimal $a$-worlds, with respect to the preorder associated with $\alpha$, constitutes a selection function $f(a,\alpha)$. Hence, the aforementioned result also holds for preferential conditionals.  This result generalizes Lewis's \cite{Lewis1976} work on Stalnaker conditionals. Lewis demonstrated that the probability of a Stalnaker conditional $a \cond_f b$ corresponds to the probability of the consequent $b$ imaged on the antecedent $a$. Essentially, the probability of a Stalnaker conditional aligns with a specific updated probability distribution. Given that belief functions generalize probabilities, our finding indicates that the probability of selection function-based conditionals generally follows an updated belief function, and in specific instances, for instance in the case of Stalnaker conditionals, this updated belief function can indeed reduce to a probability, thereby implying Lewis's original result (further details are in subsequent sections).

Therefore, it is legitimate to ask under what conditions the belief function induced by a conditional is also a probability function. The next result provides an answer to this question, bringing together some of the results seen so far and characterizing the conditions under which the map defined in (\ref{eqBELIEF}) is a probability function.
\begin{proposition}\label{propBasic1}
Consider a finite Boolean algebra ${\bf A}$, a selection function $f: A \times \atom(\mathbf{A})\to A$, and a  positive probability $P: A \to [0, 1]$, for all $a\in A$, the following conditions are equivalent:
\begin{itemize}
\item[(i)] $P(a\cond_f(\cdot)):A\to[0,1]$ is a probability function,
\item[(ii)] $|f(a, \alpha)|=1$, 
\item[(iii)] $R_a^f$ is a function,
\item[(iv)] for all $x \in A$, $\Box^f_ax = \Diamond^f_ax$.
\end{itemize}
\end{proposition}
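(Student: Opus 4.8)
The plan is to reduce the statement to a single substantive equivalence, since most of the work has already been done earlier in the paper. The equivalences among (ii), (iii) and (iv) are essentially free: (ii) $\Leftrightarrow$ (iii) is immediate from the definition of the relation $R_a^f$, because saying that $R_a^f$ is a function means that every $\alpha$ has exactly one successor $\beta$ with $\alpha R_a^f\beta$, i.e.\ exactly one $\beta\in f(a,\alpha)$, which is precisely $|f(a,\alpha)|=1$ for all $\alpha\in\atom(\mathbf{A})$; and (ii) $\Leftrightarrow$ (iv) is exactly the content of Fact \ref{fact1}. So the only genuinely new part is to connect these with (i), and I would phrase the whole proof as proving the cycle by establishing (i) $\Leftrightarrow$ (ii) and then appealing to Fact \ref{fact1} for the rest.

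For (i) $\Leftrightarrow$ (ii) the first move is to invoke Fact \ref{fc:characterization}, which identifies $P(a\cond_f(\cdot))$ with the imaged belief function $Bel_a(\cdot)$ determined by the mass distribution $m_a(b)=\sum_{\alpha:\,f(a,\alpha)=b}P(\alpha)$. The algebraic hinge is the standard characterization from Dempster--Shafer theory: a belief function is a (finitely additive) probability function exactly when its mass is concentrated on the atoms of $\mathbf{A}$, equivalently when $m_a(b)=0$ for every $b\in A$ that is not an atom. I would state this as the key lemma, derivable either by comparing $Bel_a$ with its dual plausibility or by observing that the superadditivity inequality for $Bel_a$ becomes an equality for all disjunctions precisely when no focal element straddles a complementary pair. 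At this point the positivity hypothesis on $P$ enters decisively: since $P(\alpha)>0$ for every atom $\alpha$, we have $m_a(b)>0$ if and only if $f(a,\alpha)=b$ for some $\alpha$. Hence ``mass concentrated on atoms'' is equivalent to ``$f(a,\alpha)$ is an atom for every $\alpha$'', i.e.\ to $|f(a,\alpha)|=1$ for all $\alpha$, giving (i) $\Leftrightarrow$ (ii).

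Spelling out the two directions: for (ii) $\Rightarrow$ (i) one checks directly that if all the values $f(a,\alpha)$ are atoms, then $m_a$ lives on singletons and $Bel_a$ collapses to the additive map $b\mapsto\sum_{\alpha:\,f(a,\alpha)\subseteq b}P(\alpha)$, which is a probability; for the contrapositive of (i) $\Rightarrow$ (ii) one assumes some $f(a,\alpha_0)$ is not an atom and derives that $Bel_a$ fails additivity. The main obstacle I anticipate is the careful treatment of the degenerate value $f(a,\alpha_0)=\bot$, which is neither an atom nor a straddling focal element of the usual kind, so the two failure cases must be handled separately: if $|f(a,\alpha_0)|\geq 2$ then positivity forces positive mass on a genuine non-atom $c$, and splitting $c$ across a disjunction of two of its atoms witnesses strict superadditivity; whereas if $f(a,\alpha_0)=\bot$ then $m_a(\bot)>0$ already yields $Bel_a(\bot)>0$, contradicting $P(\bot)=0$. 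Once the belief-to-probability characterization is in place and the role of positivity is made explicit, the case $|f(a,\alpha)|=0$ is excluded alongside $|f(a,\alpha)|\geq 2$, and the remaining equivalences follow by reading off Facts \ref{fact1} and \ref{fc:characterization}.
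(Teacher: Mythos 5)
Your proof is correct, and its overall strategy coincides with the paper's: both dispatch (ii) $\Leftrightarrow$ (iii) $\Leftrightarrow$ (iv) via Fact~\ref{fact1} and the definition of $R_a^f$, both reduce (i) to a property of $Bel_a$ via Fact~\ref{fc:characterization}, and your contrapositive argument for (i) $\Rightarrow$ (ii) is essentially identical to the paper's --- the same splitting $b=\beta\vee(b\wedge\neg\beta)$ of a non-atomic focal element to witness strict superadditivity (with positivity of $P$ guaranteeing $m_a(b)>0$), and the same separate handling of $f(a,\alpha)=\bot$ via $Bel_a(\bot)>0$. The one point of genuine divergence is the direction that establishes (i): the paper proves (iv) $\Rightarrow$ (i), using the modal fact that $\Box_a^f x=\Diamond_a^f x$ for all $x$ forces $\Box_a^f$ to distribute over joins, whence $Bel_a$ is additive; you instead prove (ii) $\Rightarrow$ (i) directly, observing that under uniqueness the mass $m_a$ is concentrated on singletons, so $Bel_a$ collapses to the manifestly additive map $b\mapsto\sum_{\alpha:\, f(a,\alpha)\subseteq b}P(\alpha)$. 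Your route is marginally more elementary, since it needs no modal correspondence beyond Fact~\ref{fact1} itself, whereas the paper's choice keeps the modal-operator viewpoint of Section~\ref{sec3} in the foreground; both are equally valid closures of the cycle of implications. One small remark: you invoke as a black box the Dempster--Shafer characterization that a belief function is a probability exactly when its mass is concentrated on the atoms, but since your final paragraph verifies both directions of this concretely in the case at hand (with positivity of $P$ entering exactly where it must, to convert ``$f(a,\alpha)=b$ for some $\alpha$'' into $m_a(b)>0$), your argument is self-contained and does not actually depend on that citation.
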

\begin{proof}
The equivalence between $(ii)$, $(iii)$, and $(iv)$ follows from Fact \ref{fact1} and the definition of $R_a^f$. We now prove that $(iv)$ implies $(i)$. Indeed, if $\Box^f_ax = \Diamond^f_ax$ holds for all $x \in A$, we have that $\Box_a^f$ distributes over disjunction, namely $\Box_a^f(c \vee b)=\Box_a^fc \vee \Box^f_a b$, and so, by Fact \ref{fc:characterization}, we would have that $Bel_a(c \vee b)=Bel_a(b) + Bel(b) - Bel_a(c \wedge b)$. Namely, the belief function $Bel_a$ is additive and hence a probability; namely $P(a\cond_f(\cdot)):A\to[0,1]$ is a probability function. We now prove that $(i)$ implies $(ii)$.  Let us reason by contraposition and assume that $f$ doesn't satisfy uniqueness, namely that there is an atom $\alpha$ and an element $a \in A$ such that $|f(a, \alpha)| > 1$ or $f(a, \alpha)=\bot$. If the former holds, let $b = f(a, \alpha)$, then, by definition of $Bel_a$, we would have that $Bel_a(b)>0$, since $m_a(b)>0$.  Thus, take $\beta\in b$ so that $b=\beta\vee (b\wedge\neg \beta)$, and then one has $Bel_a(b) > Bel_a(\beta) + Bel_a(b\wedge \neg \beta)$, whence $Bel_a$ is not additive. 
Finally, if $f(a, \alpha)=\bot$, then $Bel_a(\bot)>0$. Therefore, $Bel_a$ is not a probability either. So, by Fact \ref{fc:characterization}, $P(a\cond_f(\cdot))$ is not a probability.
\end{proof}

The above results are general findings concerning conditionals and probability, and they can be interpreted as having limitative implications. On one hand, they demonstrate that the probability of a selection function-based conditional aligns with imaged belief functions. On the other hand, they establish that the belief function induced by a conditional is a probability function if and only if the selection function underlying the conditional satisfies the uniqueness property. Given these general results, we can now proceed to investigate whether more specific relationships exist between updating methods for probability and selection function-based conditionals.



\section{Dealing with General Updates}\label{sec4}

 As briefly mentioned in the introduction, Lewis \cite{Lewis1976} introduced \emph{imaging} as an alternative to conditionalization for updating a probability measure $P:{\bf A}\to [0,1]$. Roughly, this procedure changes the initial probability $P$ by acting as if a certain event $a$ holds. This update is accomplished by leveraging a similarity structure among possible worlds (or, equivalently, a selection function): the updated $a$-imaged probability $P_a$ is obtained by shifting the probability mass of each atom $\alpha$ to its closest $a$-world. Crucially, this updating procedure relies on a uniqueness assumption: for each  $\alpha\in \atomA$, there is only one closest $a$-world to which $\alpha$ transfers its probability mass $P(\alpha)$. G{\"a}rdenfors \cite{Gardenfors1982} further generalized Lewis's imaging by relaxing this uniqueness assumption, allowing for multiple closest worlds. Building on this prior work and drawing inspiration from G\"uenther \cite{Guenther2022PCIP}, we now introduce a general framework for defining updating methods for probabilities that generalizes both Lewis and G{\"a}rdenfors's imaging. 

First, given a finite Boolean algebra $\mathbf{A}$ and a selection function $f: A \times \atom(\mathbf{A})\to A$, we define a {\em distribution function for $f$} as a map $\lambda:A\times \atom({\bf A})\to [0,1]^{|\atomA|}$ satisfying the following constraints:

\begin{itemize}
\item $\lambda(a, \alpha)(\beta) =0$ if and only if $\beta \notin f(a, \alpha)$

\item $\sum\limits_{\beta \in f(a, \alpha)}\lambda(a, \alpha)(\beta)=1$

\end{itemize}
Observe that the second item above states that, for all $a\in A$ and $\alpha\in \atomA$, $\lambda(a,\alpha)$ is a probability distribution on $f(a,\alpha)$.


Now, consider a finite Boolean algebra $\mathbf{A}$, a positive probability $P:{\bf A}\to[0,1]$, a selection function $f: A \times \atom(\mathbf{A})\to  A$, and distribution function $\lambda:A\times \atom({\bf A})\to [0,1]^{|\atomA|}$ for $f$. Given an element $a \in A\setminus\{\bot\}$, we can define a probability distribution $P^\lambda_a$ over $\atomA$ as follows: for all $\beta \in \atomA$

 \begin{equation}\label{def:im}
P_a^\lambda(\beta)=\sum_{\alpha: \beta\in f(a,\alpha)}\lambda(a,\alpha)(\beta)\cdot P(\alpha)
\end{equation}

Specifically, the probability distribution $P_a^\lambda(\cdot)$ is obtained by redistributing the original probability mass of each atom $\alpha$ among all atoms in $f(a,\alpha)$, resulting in a new probability distribution. Hence, the corresponding updated probability function can be expressed as:

\begin{equation}\label{def:improb}
P_a^\lambda(b)=\sum_{\beta \in b} P^\lambda_a(\beta)=\sum_{\beta \in b}\left(\sum_{\alpha: \beta\in f(a,\alpha)}\lambda(a,\alpha)(\beta)\cdot P(\alpha)\right).
\end{equation}

$P_a^\lambda(\cdot)$ represents an updated probability distribution where the shift of probability mass is performed according to the selection function $f$. The constraints imposed on the distribution function $\lambda$ ensure that $P_a^\lambda(\cdot)$ is indeed a probability distribution. 
Moreover, all the probability mass is transferred to the atoms selected by $f$. Indeed, the following result can be proven from equation \ref{def:improb} and the definition of the distribution function $\lambda$:

\begin{fact}\label{fac:imisprob}

Consider a finite Boolean algebra $\mathbf{A}$, a selection function $f: A \times \atom(\mathbf{A})\to  A$ and a distribution mass $\lambda:A\times \atom({\bf A})\to [0,1]^{|\atomA|}$. For $a \in A\setminus\{\bot\}$, $P^\lambda_a(\cdot)$ is a probability function

\end{fact}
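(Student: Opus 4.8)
The plan is to verify that $P_a^\lambda(\cdot)$ is a probability function by checking that the values $P_a^\lambda(\beta)$, for $\beta$ ranging over $\atomA$, form a genuine probability distribution on the atoms. Since any such distribution extends uniquely and additively to a probability function on the finite algebra $\mathbf{A}$ via $b \mapsto \sum_{\beta \in b} P_a^\lambda(\beta)$ — which is precisely how $P_a^\lambda(\cdot)$ was defined in equation (\ref{def:improb}) — this suffices. Concretely, I would establish two facts: \emph{non-negativity}, that $P_a^\lambda(\beta) \geq 0$ for every atom $\beta$, and \emph{normalization}, that $\sum_{\beta \in \atomA} P_a^\lambda(\beta) = 1$.

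Non-negativity is immediate from the definition in equation (\ref{def:im}): each summand $\lambda(a,\alpha)(\beta)\cdot P(\alpha)$ is a product of the non-negative quantity $\lambda(a,\alpha)(\beta) \in [0,1]$ with the non-negative value $P(\alpha)$, so the whole sum is non-negative. The core of the argument is therefore the normalization identity, which I would obtain by an exchange in the order of summation (a discrete Fubini step). Starting from
$$\sum_{\beta \in \atomA} P_a^\lambda(\beta) = \sum_{\beta \in \atomA}\ \sum_{\alpha:\ \beta \in f(a,\alpha)} \lambda(a,\alpha)(\beta)\cdot P(\alpha),$$
I would regroup the double sum over the pairs $(\alpha,\beta)$ satisfying $\beta \in f(a,\alpha)$, rewriting it as
$$\sum_{\alpha \in \atomA} P(\alpha) \sum_{\beta \in f(a,\alpha)} \lambda(a,\alpha)(\beta).$$
The inner sum equals $1$ by the second defining constraint on the distribution function $\lambda$, leaving $\sum_{\alpha \in \atomA} P(\alpha)$, which equals $1$ because $P$ is a probability distribution on the atoms.

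The only point requiring care — and the closest thing to an obstacle — is the legitimacy of the inner sum over $f(a,\alpha)$ being equal to $1$: this presupposes $f(a,\alpha) \neq \bot$, since the constraint $\sum_{\beta \in f(a,\alpha)}\lambda(a,\alpha)(\beta)=1$ cannot be satisfied by an empty sum. This is, however, exactly guaranteed by the definition of a distribution function for $f$, as the requirement that $\lambda(a,\alpha)$ be a probability distribution on $f(a,\alpha)$ forces $f(a,\alpha)$ to contain at least one atom wherever $\lambda$ is defined; hence the summation exchange goes through without imposing further hypotheses on $f$. With non-negativity, normalization, and the additivity of the extension all in hand, $P_a^\lambda(\cdot)$ is a probability function, which completes the proof.
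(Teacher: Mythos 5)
Your proof is correct and is precisely the argument the paper has in mind: the paper states Fact~\ref{fac:imisprob} without an explicit proof, remarking only that it "can be proven from equation (\ref{def:improb}) and the definition of the distribution function $\lambda$," and your Fubini-style exchange of summations, using the constraint $\sum_{\beta \in f(a,\alpha)}\lambda(a,\alpha)(\beta)=1$ to collapse the inner sum, is exactly that intended derivation. Your observation that the existence of $\lambda$ itself forces $f(a,\alpha)\neq\bot$ (so no empty inner sums arise) is a worthwhile explicit note on a point the paper leaves tacit.
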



According to the general updating method for probabilities defined above, which utilizes the distribution function $\lambda$, several different updating procedures can be represented, including conditionalization \cite{Guenther2022PCIP}, imaging \cite{Lewis1976}, and general imaging \cite{Gardenfors1982}. Furthermore, since $\lambda$ allows for considerable flexibility in the redistribution of probability mass, numerous other updating procedures, beyond these standard and common ones, can also be defined within this framework.

Having established this foundation, an important question remains: what is the relationship between selection function-based conditionals and $\lambda$-updating methods for probability? In light of the preceding results, we can demonstrate that these updating methods can be characterized as a probability induced by a conditional only in specific cases, thus proving that selection function-based conditionals cannot capture the entire scope of $\lambda$-updating procedures.

First, let us prove a general result concerning the relationship between the probability of conditionals and updating methods, namely that the probability of a conditional $a \cond_f b$ (with $f$ satisfying normality) will always be less than or equal than the corresponding $\lambda$-updated probability of $b$ given $a$:

 \begin{fact}\label{dc:lessorequal}
Consider a finite Boolean algebra $\mathbf{A}$, a selection function $f: A \times \atom(\mathbf{A})\to  A$ satisfying normality, and a distribution mass $\lambda:A\times \atom({\bf A})\to [0,1]^{|\atomA|}$. For $a \in A\setminus\{\bot\}$, and for all $b \in A$, \[P(a \cond_f b) \leq P^\lambda_a(b)\]
 \end{fact}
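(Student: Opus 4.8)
The plan is to reduce the inequality to a pointwise (per-atom) comparison by rewriting both sides as sums over the atoms $\alpha \in \atomA$, each weighted by $P(\alpha)$. First I would expand $P^\lambda_a(b)$ using its definition in \eqref{def:improb} and swap the order of the two finite summations, obtaining
$$
P^\lambda_a(b) = \sum_{\alpha \in \atomA} P(\alpha) \left( \sum_{\beta \in b \cap f(a,\alpha)} \lambda(a,\alpha)(\beta) \right).
$$
On the other side, using the definition of $\cond_f$ from \eqref{eqCond}, I would write $P(a \cond_f b) = \sum_{\alpha : f(a,\alpha) \subseteq b} P(\alpha)$. Since both expressions are now $P$-weighted sums indexed by the same set of atoms, it suffices to compare, for every fixed $\alpha$, the coefficient of $P(\alpha)$ on each side.

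The comparison splits into two cases. If $f(a,\alpha) \not\subseteq b$, then $\alpha$ contributes nothing to $P(a \cond_f b)$, while its contribution to $P^\lambda_a(b)$ is the nonnegative quantity $\sum_{\beta \in b \cap f(a,\alpha)} \lambda(a,\alpha)(\beta) \geq 0$, so the inequality is immediate. If instead $f(a,\alpha) \subseteq b$, then $\alpha$ contributes $P(\alpha)$ to the left-hand side, and since here $b \cap f(a,\alpha) = f(a,\alpha)$, its coefficient on the right-hand side is $\sum_{\beta \in f(a,\alpha)} \lambda(a,\alpha)(\beta)$, which equals exactly $1$ by the normalization constraint on $\lambda$. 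Thus the two coefficients coincide in this case, and the pointwise inequality holds in both.

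The only genuine subtlety, and the step where the hypothesis is essential, is this last one: the normalization $\sum_{\beta \in f(a,\alpha)} \lambda(a,\alpha)(\beta) = 1$ is available only when $f(a,\alpha) \neq \bot$. Without normality one could have $f(a,\alpha) = \bot \subseteq b$, so that $\alpha$ satisfies the conditional vacuously and contributes $P(\alpha)$ to $P(a \cond_f b)$, yet the empty sum over $\beta \in \bot$ contributes nothing to $P^\lambda_a(b)$, which would break the pointwise estimate. Since $a \neq \bot$, normality guarantees $f(a,\alpha) \neq \bot$ for every $\alpha$, ruling out precisely this degenerate case. Summing the pointwise inequalities over all $\alpha$, weighted by $P(\alpha) \geq 0$, then yields $P(a \cond_f b) \leq P^\lambda_a(b)$, as desired.
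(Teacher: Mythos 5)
Your proof is correct and takes essentially the same approach as the paper's own (sketch) proof: both arguments compare, atom by atom, how the mass $P(\alpha)$ of each $\alpha \in a \cond_f b$ is fully redistributed by $\lambda$ among the atoms of $f(a,\alpha) \subseteq b$ and hence counted in $P^\lambda_a(b)$, with the remaining contributions to $P^\lambda_a(b)$ being nonnegative. Your version is a more explicit writeup of that sketch (the summation swap and the pointwise coefficient comparison), and it additionally pinpoints that normality is needed precisely to rule out the vacuous case $f(a,\alpha)=\bot$, a role the paper's sketch leaves implicit.
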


\begin{proof} (Sketch)
We show that for any $\alpha \in   a \cond_f b$, the initial probability mass of $P(\alpha$) contributes to the calculation of $P^\lambda_a(b)$. Specifically, let us consider $\alpha \in \atomA$ such that $\alpha \in   a \cond_f b$. This implies that $  f(a, \alpha) \subseteq  b$. According to the definition of $P^\lambda_a(\cdot)$, the probability mass $P(\alpha)$ is redistributed among the elements in $f(a, \alpha)$. However, since $  f(a, \alpha) \subseteq  b$, we know that for every $\beta \in f(a, \alpha)$, $\beta \in   b$. This implies that the mass $P(\alpha)$ from every $\alpha \in   a\cond_f b$ is redistributed among atoms that are below $b$. Consequently, this mass will be included in the summation that yields $P^\lambda_a(b)$. Thus, we have $\sum_{\alpha \in   a \cond_f b}P(\alpha) \leq P^\lambda_a(b)$, which is equivalent to $P(a \cond_f b) \leq P^\lambda_a(b)$.
\end{proof}

We now have all the necessary components to show the conditions under which the probability of a conditional corresponds to an updated probability.

\begin{theorem}\label{thmMain1}
Consider a finite Boolean algebra $\mathbf{A}$, a selection function $f:A\times \atom({\bf A})\to A$ satisfying normality, a distribution function $\lambda:A\times \atom({\bf A})\to [0,1]^{|\atomA|}$, and a positive probability $P: A \to [0, 1]$. The following holds: \[(\text{for all $ a \in A\setminus\{\bot\}$ and $b \in A$, } P(a \cond_f b) = P^\lambda_a(b)) \Leftrightarrow \text{ $f$ satisfies uniqueness }\]
\end{theorem}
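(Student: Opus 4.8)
The plan is to prove the two implications separately, relying on the closed forms already available: $P(a\cond_f b)=Bel_a(b)=\sum_{\alpha:\,f(a,\alpha)\subseteq b}P(\alpha)$ from Fact~\ref{fc:characterization}, the defining expression $P_a^\lambda(b)=\sum_{\beta\in b}\sum_{\alpha:\,\beta\in f(a,\alpha)}\lambda(a,\alpha)(\beta)\cdot P(\alpha)$, and the inequality $P(a\cond_f b)\leq P_a^\lambda(b)$ of Fact~\ref{dc:lessorequal}, which I will upgrade to a strict inequality on a well-chosen witness.

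For the direction $(\Leftarrow)$ I would assume $f$ satisfies uniqueness. Together with normality, for every $a\neq\bot$ and every $\alpha$ this forces $|f(a,\alpha)|=1$, say $f(a,\alpha)=\{\gamma\}$. The two constraints defining a distribution function then leave no freedom for $\lambda$: since $\lambda(a,\alpha)$ is a probability distribution supported on the singleton $f(a,\alpha)$, we must have $\lambda(a,\alpha)(\gamma)=1$ and $\lambda(a,\alpha)(\beta)=0$ for $\beta\neq\gamma$. Substituting this into the definition of $P_a^\lambda$ collapses the inner weights, giving $P_a^\lambda(b)=\sum_{\alpha:\,f(a,\alpha)\subseteq b}P(\alpha)$, which is exactly $P(a\cond_f b)$ by Fact~\ref{fc:characterization}. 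This half is a routine computation once the forced form of $\lambda$ is observed.

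For the direction $(\Rightarrow)$ I would argue by contraposition: suppose uniqueness fails, so there is an atom $\alpha_0$ and an element $a\neq\bot$ with $|f(a,\alpha_0)|\geq 2$ (these are the antecedents the equality can detect). I then exhibit a single element $b$ on which the two quantities disagree, and the natural witness is a single atom $\beta\in f(a,\alpha_0)$; take $b=\beta$. On the conditional side, normality forces every $\alpha$ contributing to $Bel_a(\beta)$ to satisfy $f(a,\alpha)=\{\beta\}$, so $\alpha_0$, whose image contains at least two atoms, contributes nothing to $P(a\cond_f \beta)=Bel_a(\beta)$. On the updating side, the mass $P(\alpha_0)$ does contribute to $P_a^\lambda(\beta)$, weighted by $\lambda(a,\alpha_0)(\beta)$, which is strictly positive precisely because $\beta\in f(a,\alpha_0)$, and with $P(\alpha_0)>0$ by positivity of $P$. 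Comparing the two sums term by term, every $\alpha$ with $f(a,\alpha)=\{\beta\}$ contributes $P(\alpha)$ to both sides while $\alpha_0$ contributes only to the right-hand side, yielding $P(a\cond_f \beta)<P_a^\lambda(\beta)$ and hence the failure of the equality.

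The main obstacle is this second direction, since Fact~\ref{dc:lessorequal} only delivers $\leq$: the real work is to pinpoint a witness $b$ at which the inequality becomes strict and to bookkeep exactly which atoms' probability mass lands on each side. The decisive leverage is the joint use of three hypotheses, each of which is genuinely needed: normality rules out $f(a,\alpha)=\bot$ and thereby pins the $Bel_a(\beta)$-contributors to the singleton image $\{\beta\}$; positivity of $P$ guarantees $P(\alpha_0)>0$; and the defining property of $\lambda$ guarantees $\lambda(a,\alpha_0)(\beta)>0$. Only with all three in place is the ``excess'' mass coming from the non-singleton image $f(a,\alpha_0)$ strictly positive and counted on the updating side but not on the conditional side.
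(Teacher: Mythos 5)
Your proof is correct, but your right-to-left direction takes a genuinely different route from the paper's. The paper argues abstractly: if uniqueness fails, then by Proposition~\ref{propBasic1} the map $P(a \cond_f (\cdot))$ is not a probability function (it is a properly superadditive belief function), whereas $P^\lambda_a(\cdot)$ is a probability by Fact~\ref{fac:imisprob}, so the two maps cannot coincide --- a short, non-constructive argument that recycles the earlier characterization but does not exhibit where the disagreement occurs. You instead construct an explicit witness: taking $\beta \in f(a,\alpha_0)$ with $|f(a,\alpha_0)| \geq 2$, you show $P(a \cond_f \beta) < P^\lambda_a(\beta)$ by a term-by-term comparison, using normality to pin the contributors of $Bel_a(\beta)$ to atoms with $f(a,\alpha)=\{\beta\}$, positivity of $P$ and the support condition on $\lambda$ to make $\alpha_0$'s excess contribution strictly positive. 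This is exactly the mechanism of the paper's Example after the theorem (which the paper presents only as an illustration, for a particular three-atom algebra), promoted into the proof itself; it buys a strictly stronger conclusion (a strict inequality at a concrete $b$, not mere inequality somewhere) and is self-contained, at the cost of more bookkeeping than the paper's two-line appeal to Proposition~\ref{propBasic1}. Your left-to-right direction --- observing that uniqueness plus normality forces $\lambda$ to be $\{0,1\}$-valued and then collapsing the double sum --- is the same computation the paper waves off as ``straightforward'' and spells out in the discussion following the theorem, so there is no divergence there.
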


\begin{proof} (Sketch)
$(\Leftarrow)$. This direction is straightforward. $(\Rightarrow)$ By contraposition, assume that $f$ doesn't satisfy uniqueness, namely there are $\alpha$ and $a$ such that $|f(a, \alpha)| > 1$, by normality. Then, by Proposition \ref{propBasic1}, we have that $P(a \cond_f(\cdot))$ is not a probability function, hence, a fortiori, it cannot be $P^\lambda_a(\cdot)$ which is a probability by Fact \ref{fac:imisprob}; therefore there must be some $a\in A\setminus\{\bot\}$ and $b \in A$ such that $P(a \cond_f b)\neq P^\lambda_a(b)$. 
\end{proof}

In conclusion of this section, let us examine the previous result in more detail. First of all, let us observe that if $f$ satisfies the uniqueness constraint (along with normality), 
then for all $a \in A\setminus\{\bot\}$, $P(a \cond_f (\cdot))$ and $P^\lambda_a(\cdot)$ coincide. This is because, under uniqueness, $\lambda(a,\alpha)$, being a probability distribution on $f(a,\alpha)=\{\beta\}$, it can only take values of 0 or 1; specifically,  $\lambda(a,\alpha)(\beta)=1$ and 0 otherwise. This means that each world (or atom) $\alpha$ transfers its entire probability mass to its unique closest $a$-world, as 
determined by $f$. Hence, due to uniqueness, the mass $m_a$, reduces 
to a probability distribution. 

Conversely, when $f$ does not satisfy uniqueness, it generally holds that for some  $a,b\in A$, with $a\neq\bot$, $P(a \cond_f b) < P^\lambda_a(b)$ as the following example shows.

\begin{example}
Let us consider the algebra ${\bf A}$ of three atoms, i.e. $\atomA=\{\alpha_1,\alpha_2,\alpha_3\}$. Furthermore, conforming to our convention,  let us represent the elements of $A$ as subsets of $\atomA$.

Let us consider the selection function $f$ such that $f(\{\alpha_2, \alpha_3\}, \alpha_1)=\{\alpha_2, \alpha_3\}$, 
 $f(\{\alpha_2, \alpha_3\}, \alpha_2)=\{\alpha_2\}$ and $f(\{\alpha_2, \alpha_3\}, \alpha_3)=\{\alpha_3\}$. Notice that $f$ does satisfy normality, yet not the uniqueness property because $|f(\{\alpha_2, \alpha_3\}, \alpha_1)|=2$. 

 Let $P$ be a positive probability on ${\bf A}$ and $\lambda$ be a distribution function. Recall from what we stated at the very beginning of this section that for all $a\in A$ and $\alpha\in \atom{\bf A}$, $\lambda(a,\alpha)$ is a probability distribution  on $f(a, \alpha)$. Thus, assuming $\lambda$ to be positive, we have that
 \begin{itemize}
 \item $\lambda(\{\alpha_2,\alpha_3\}, \alpha_1)$ distributes postively on $f(\{\alpha_2,\alpha_3\}, \alpha_1)=\{\alpha_2,\alpha_3\}$ and therefore\\ $\lambda(\{\alpha_2,\alpha_3\}, \alpha_1)(\alpha_2)>0$ and $\lambda(\{\alpha_2,\alpha_3\}, \alpha_1)(\alpha_3)>0$;
 \item $\lambda(\{\alpha_2,\alpha_3\}, \alpha_2)$ can only assign $1$ to $\alpha_2$ being ${\alpha_2}$ the unique element in its domain.
  \end{itemize}
  Then, by definition:
 \begin{itemize}
\item $P(\{\alpha_2,\alpha_3\}\cond_f {\alpha_2})=P(f(\{\alpha_2, \alpha_3\}, \alpha_2))=P(\alpha_2)$;
\item $P^\lambda_{\{\alpha_2,\alpha_3\}}(\alpha_2)=\lambda(\{\alpha_2,\alpha_3\}, \alpha_1)(\alpha_2)\cdot P(\alpha_1)+\lambda(\{\alpha_2,\alpha_3\}, \alpha_2)(\alpha_2)\cdot P(\alpha_2)=\lambda(\{\alpha_2,\alpha_3\}, \alpha_1)(\alpha_2)\cdot P(\alpha_1)+1\cdot P(\alpha_2)$.
 \end{itemize}
Therefore, since $P$ is positive and $\lambda(\{\alpha_2,\alpha_3\}, \alpha_1)(\alpha_2)>0$,
$$
P(\{\alpha_2,\alpha_3\}\cond_f {\alpha_2})<P^\lambda_{\{\alpha_2,\alpha_3\}}(\alpha_2).
$$
\end{example}


\section{Conclusion}\label{sec5}

The present work aimed to demonstrate that the probability of selection function conditionals can be characterized in terms of updated belief functions within the framework of Dempster-Shafer theory. Conversely, classical Bayesian probabilistic updates (defined in terms of $\lambda$) cannot be characterized as the probability of a selection function conditional, except in a highly restricted case, specifically when the set of closest worlds involved in defining both the update and the conditional is a singleton. These results are quite general, as selection function conditionals represent a broad class of conditionals, including variably strict conditionals and preferential conditionals.

The results presented here can be further generalized to gain new insights into the logic of conditionals. In particular, the finding that the probability of selection function-based conditionals can be represented as a belief function suggests that this type of conditional might be representable in terms of a standard normal modal operator from modal logic. Indeed, our results in Section 3 can be further generalized to explore whether selection function-based conditionals can be represented as $a \cond_f b \equiv \square (a \rightarrow b)$, where $\to$ can be another binary connective. Previous work in this direction by \cite{Rosella2023-ROSCAM-10} has shown that Lewis counterfactuals $a \cond_f b$ can be represented as $\square(b \mid a)$ within the framework of Boolean algebras of conditionals \cite{Flaminio2020}, where $(\cdot\mid\cdot)$ is a conditional that satisfies suitable algebraic counterparts of the laws of conditional probability. Given that Lewis counterfactuals are a specific type of selection function conditional, our findings in this work suggest that the representation result in \cite{Rosella2023-ROSCAM-10} can also be extended to other kinds of selection function conditionals by imposing appropriate constraints on $\square$.

On a more conceptual level, we have shown that a broad class of updating methods, including conditionalization, cannot be interpreted as the probability that a certain conditional connective holds. A question remains open regarding whether there exists a specific operator whose probability can represent these kinds of updating methods. Our results here indicate that the possible candidates are limited.

\subsection*{Acknowledgments} The authors are grateful to the anonymous referees for their careful reading and valuable suggestions to clarify several parts of the present paper.  
Flaminio and Godo acknowledge the Spanish projects SHORE (PID2022-141529NB-C21)  and LINEXSYS (PID2022-139835NB-C21) respectively, both funded by MCIU/AEI/10.13039/501100011033. 
 Flaminio and Godo also acknowledge partial support by the  H2020-MSCA-RISE-2020 project MOSAIC (Grant Agreement number 101007627).
Giuliano Rosella acknowledges financial support from the Italian Ministry of University and Research (MUR) through the PRIN 2022 grant n. 2022ARRY9N 
funded by the European Union (Next Generation EU).

\nocite{*}
\bibliographystyle{eptcs}

\end{document}